\definecolor{aliceblue}{rgb}{0.94, 0.97, 1.0}
\definecolor{blizzardblue}{rgb}{0.67, 0.9, 0.93}
\definecolor{antiquebrass}{rgb}{0.8, 0.58, 0.46}
\definecolor{beaublue}{rgb}{0.74, 0.83, 0.9}
\newtheorem{lemma}{Lemma}
\newtheorem{remark}{Remark}
\begin{document}

\title{Towards Big data processing in IoT: Path Planning and Resource Management of UAV Base Stations in Mobile-Edge Computing System}

\author{Shuo~Wan, Jiaxun~Lu, Pingyi~Fan,~\IEEEmembership{Senior Member,~IEEE} and Khaled~B.~Letaief{*},~\IEEEmembership{Fellow,~IEEE}\\

\small
Tsinghua National Laboratory for Information Science and Technology(TNList),\\

Department of Electronic Engineering, Tsinghua University, Beijing, P.R. China\\
E-mail: wan-s17@mails.tsinghua.edu.cn, lujx14@mails.tsinghua.edu.cn, ~fpy@tsinghua.edu.cn\\
{*}Department of Electronic Engineering, Hong Kong University of Science and Technology, Hong Kong\\
Email: eekhaled@ece.ust.hk}

\maketitle

\begin{abstract}
Heavy data load and wide cover range have always been crucial problems for online data processing in internet of things (IoT). Recently, mobile-edge computing (MEC) and unmanned aerial vehicle base stations (UAV-BSs) have emerged as promising techniques in IoT.
In this paper, we propose a three-layer online data processing network based on MEC technique. On the bottom layer, raw data are generated by widely distributed sensors, which reflects local information. Upon them, unmanned aerial vehicle base stations (UAV-BSs) are deployed as moving MEC servers, which collect data and conduct initial steps of data processing. On top of them, a center cloud receives processed results and conducts further evaluation. As this is an online data processing system, the edge nodes should stabilize delay to ensure data freshness. Furthermore, limited onboard energy poses constraints to edge processing capability. To smartly manage network resources for saving energy and stabilizing delay, we develop an online determination policy based on Lyapunov Optimization. In cases of low data rate, it tends to reduce edge processor frequency for saving energy. In the presence of high data rate, it will smartly allocate bandwidth for edge data offloading. Meanwhile, hovering UAV-BSs bring a large and flexible service coverage, which results in the problem of effective path planning. In this paper, we apply deep reinforcement learning and develop an online path planning algorithm. Taking observations of around environment as input, a CNN network is trained to predict the reward of each action. By simulations, we validate its effectiveness in enhancing service coverage. The result will contribute to big data processing in future IoT.

\end{abstract}

\begin{IEEEkeywords}
Big data, Internet of Things, Deep reinforcement learning, Edge computing, Online data processing
\end{IEEEkeywords}

%
\IEEEpeerreviewmaketitle

\section{Introduction}
%
%
%
%
\IEEEPARstart{T}{he} internet of things (IoT) has emerged as a huge network, which extends connected agents beyond standard devices to any range of traditionally non-internet-enabled devices. For instance, a large range of everyday objects such as vehicles, home appliances and street lamps will enter the network for data exchange. This extension will result in an extraordinary increase of required cover range and data amount, which is far beyond the existing network capability. For online data processing with delay requirement, the conventional cloud computing will face huge challenges. In order to collect and process big data sets with wide distribution, mobile-edge computing (MEC) and unmanned aerial vehicle base stations (UAV-BSs) have recently emerged to add existing networks with intelligence and mobility.

Conventionally, cloud computing has been deployed to provide a huge pool of computing resources for connected devices \cite{zhang2015toward}. However, as the data transmission speed is limited by communication resources, cloud computing can not guarantee its latency \cite{he2018integrated}. In face of high data rate in IoT, the data transmission load will overwhelm the communication network, which poses great challenge to online data processing. Recently, mobile-edge computing (MEC) has emerged as a promising technique in IoT. By deploying cloud-like infrastructure in the vicinity of data sources, data can be partly processed at the edge \cite{7883946}. In this way, the data stream in network will be largely reduced.

In existing works, the problem with respect to computation offloading, network resource allocation and related network structure designs in MEC have been broadly studied in various models \cite{he2018integrated,bi2018computation,rimal2017cloudlet,mao2017stochastic,park2016joint}. In \cite{he2018integrated}, the authors employed deep reinforcement learning to allocate cache, computing and communication resources for MEC system in vehicle networks. In \cite{bi2018computation}, the authors optimized the offload decision and resource allocation to obtain a maximum computation rate for a wireless powered MEC system. Considering the combination of MEC and existing communication service, a novel two-layer TDMA-based unified resource management scheme was proposed to handle both conventional communication service and MEC data traffic at the same time \cite{rimal2017cloudlet}. In \cite{mao2017stochastic}, the authors jointly optimized the radio and computational resource for Multi-user MEC computing system. In addition to the edge, the cloud was also taken into consideration in \cite{park2016joint}.

MEC system design considering computation task offloading has been sufficiently investigated in previous works. However, for IoT-based big data processing, MEC server may also serve to process local data at the edge \cite{shi2016edge,shi2016promise,8336572}. In \cite{shi2016edge}, the authors discussed the application of MEC in data processing. In \cite{shi2016promise}, the authors indicated that edge servers can process part of the data rather than completely send them to the cloud. Then in \cite{8336572}, the authors proposed a scheme for this system. In the field of edge computing, the research of edge data processing algorithm is still an open problem.

In IoT network, devices are often widely distributed with flexible movement. In this situation, conventional ground base station faces great challenge to provide sufficient service coverage. To figure out the problem, unmanned arial vehicle base stations (UAV-BSs) has recently emerged as a promising technique to add the network coverage with flexibility. In UAV-BSs wireless system, energy-aware UAV deployment and operation mechanisms are crucial for intelligent energy usage and replenishment \cite{DBLP:journals/corr/ZengZL16a}. In the literature, this issue has been widely studied \cite{matolak2015unmanned,mozaffari2016efficient,bor2016efficient,lu2018beyond,jeong2018mobile}. In \cite{matolak2015unmanned}, the authors characterized the UAV-ground channels. In \cite{mozaffari2016efficient}, the optimal hovering attitude and coverage radius were investigated. In \cite{bor2016efficient}, the authors jointly considered energy efficiency and user coverage to optimize UAV-BS placement. In \cite{lu2018beyond}, the authors considered the placement of UAV-BSs with the criterion of minimizing UAV-recall-frequency. Furthermore, UAV-BSs were also considered as a MEC server in \cite{jeong2018mobile}. However, they only considered one UAV-BS, focusing on the computation offloading problem. Besides, the cloud center was excluded from discussions.

\begin{figure}[tbp]
  \centering
  \includegraphics[width=0.9\columnwidth]{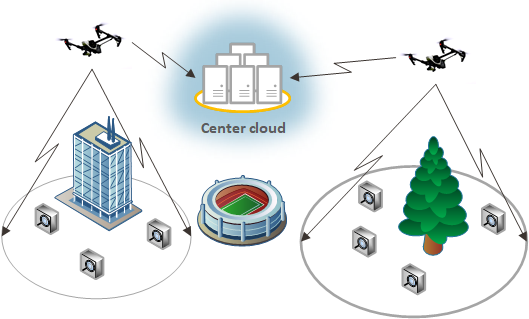}\\
  \caption{The structure of the three-layer network system. Data are generated by distributed sensors and transmitted to the cloud through UAVs at the edge. }\label{system_map}
\end{figure}

In IoT network, the data sets are generated by distributed sensors, which reflect their local information. In tasks such as supervision, the network is supposed to keep collecting and processing distributed data. Considering data freshness, the system should work in an online manner. In conventional cloud computing, all data will be transmitted to the cloud through base stations. Though the cloud may be powerful enough, the huge amount data will still pose a heavy load on the communication network. Furthermore, building base stations in a large region may cost too much, especially for rural regions. In this paper, we consider a MEC-based IoT network, where hovering UAV-BSs are deployed as edge servers. The network structure is shown in Fig. \ref{system_map}. The system is composed of three layers, involving distributed sensors, UAV-BSs and the center cloud. Distributed sensors keeps generating data, which is collected by nearby UAV-BSs. Each UAV-BS is equipped with onboard edge servers for executing initial steps of data processing. A large proportion of redundant data are split out and the extracted information is transmitted to the cloud for further analysis. The edge processing will largely relive the heavy burden on communication network. However, the limited edge computational capacity will bring new challenges. To balance the burden, part of the data will be directly offloaded to the cloud. The rest data will be temporarily stored in edge buffers, which results in delay. In this paper, it is assumed that the cloud is power enough. Therefore, our fucus is on the mobile edge nodes-UAV-BS, and discuss how to minimize the cost and delay at the edge.


The system design faces great challenges with respect to the cooperation of different layers and agents. In this paper, we investigate the problems related to UAV path planning and network resource management. Our major contributions are summarized as follows:

\begin{itemize}
 \setlength{\itemsep}{1pt}
 \setlength{\parskip}{0pt}
 \setlength{\parsep}{0pt}
 \item
 We propose a three-layer data processing network structure, which integrates cloud computing, mobile edge computing (MEC) and UAV base stations (UAV-BSs), as well as distributed IoT sensors. Data generated by distributed sensors are transmitted to UAV-BSs with onboard edge servers. It is assumed that redundant data are split out at the edge and the extracted information takes only a few bandwidth to transmit. In face of high data rate, the rest bandwidth will be allocated to UAV-BSs for data offloading. This system will largely relive the communication burden while providing a flexible service coverage.
 \item
 A reinforcement learning based algorithm is proposed for UAV-BS path planning. A local map of the around service requirement is taken as input to train a CNN neutral network, which predicts a reward for each possible action. The training samples are obtained by trials, feedbacks and corresponding observations. Considering heavy computational burden of network training, the training process is accomplished by powerful center cloud. Each UAV-BS receives network weights from cloud and selects its own moving action based on current local observations. By well-trained neutral network, UAV-BSs will automatically cooperate to cover the region of interest.
 \item
 The distributed online data processing system faces challenges in network management. As the onboard energy and computational resources of UAV-BSs are limited. In face of high data rate, part of received data will be offloaded to the cloud. Meanwhile, in face of low data rate, edge servers can lower down processor frequency for saving energy. Besides, they can also offload part of the data to further reduce energy consumption. This leads to the issue with respect to optimal network resource management. In this paper, we propose an online network scheduling algorithm based on Lyapunov optimization framework \cite{neely2010stochastic}. Without probability distributions of data sources, the network updates its policy by current buffer length, aimed at stabilizing delay while saving energy.
 \item
 The proposed algorithms are tested by simulations on Python. Simulation results show that the region of interest can be covered with good balance and high efficiency under our proposed path planning. Meanwhile, the performance with respect to energy consumption and delay are also tested in simulations. The results may assist to build an IoT network for processing a huge amount of data distributed in a large area.
\end{itemize}

The rest paper is organized as follows. We will introduce the system model and some key notations in Section \uppercase\expandafter{\romannumeral2}. In Section \uppercase\expandafter{\romannumeral3}, the path planning problem based on deep reinforcement learning will be investigated. In Section \uppercase\expandafter{\romannumeral4}, the network scheduling algorithm for data processing will be proposed based on Lyapunov optimization. The simulation results of data processing network will be shown in Section \uppercase\expandafter{\romannumeral5}. Finally, we will conclude in Section \uppercase\expandafter{\romannumeral6}.

%
\section{System model}

Consider an online distributed data processing network as shown in Fig .\ref{system_map}, where the data sources are $L$ distributed sensors denoted as $\mathbf{D}=\{d_{l}\}$. Upon them, $K$ hovering UAV-BSs carrying onboard edge servers are denoted as $\mathbf{U}=\{ u_{k} \}$. They collect data from around sensors and execute initial steps of data processing. The edge processing will split out a large sum of redundant data and the extracted information will be transmitted towards center cloud $C$ for further analysis. The internal environmental state is $\mathbf{S}=\{ s(t) \}$, which is affected by environmental elements and network scheduling policy. The observations of $s(t)$ by $u_{k}$ compose the set $\mathbf{O}(t)=\{ o_{k}(t) \}$. We denote the sensor index set as $\mathbb{L}=\{ 1, 2, ......, L \}$. The UAV-BS index set is $\mathbb{K}=\{ 1, 2, ......, K \}$. The system time set is $\mathbb{T}=\{ 0, 1, 2, ...... \}$, with interval $\tau$. In this section, we will introduce the network model, involving Air-Ground channel model, data generation model, UAV path planning model and edge computing model.

\subsection{Air-ground channel}
The Air-Ground (AG) channel involves line-of-sight (LOS) link and non-line-of-sight (NLOS) link \cite{matolak2015unmanned}. In the literature \cite{al2014optimal}, the corresponding pass loss is defined as follows.

\begin{equation}\label{pl1}
L_{\xi}(r,h)=\left\{
\begin{array}{rcl}
(\frac{4\pi f}{c})^{2}(r^{2}+h^{2})\eta_{0} & & {\xi=0}\\
(\frac{4\pi f}{c})^{2}(r^{2}+h^{2})\eta_{1} & & {\xi=1}\\
\end{array} \right.
\end{equation}
where $\xi=0$ and $\xi=1$ separately represents LOS link and NLOS link. Projecting the UAV on the ground, its distance from the covered sensor is denoted as $r$. Besides, $c$ is the speed of light and $f$ represents the signal frequency. Parameter $h$ is the hovering altitude of UAV-BSs, while $\eta_{0}$ and $\eta_{1}$ are respectively the path loss parameters for LOS link and NLOS link. As obstacles will typically reduce a large proportion of signal intensity, we have $\eta_{0}<<\eta_{1}$.

The probability of LOS link is affected by environmental elements, which is given by \cite{al2014optimal} as
\begin{equation}
p_{0}(r,h)=\frac{1}{1+aexp(-b[\theta-a])}
\end{equation}
where $a$ and $b$ are environmental constants of the target region and $\theta=\tan^{-1}(\frac{h}{r})$ is the elevation angle of UAV-BSs. Meanwhile, $1-p_{0}(r,h)$ represents the NLOS probability. Then the final average path loss of AG channel is
\begin{equation}\label{pl3}
\overline{L}(r,h) =p_{0}(r,h)L_{0}(r,h)+(1-p_{0}(r,h))L_{1}(r,h)
\end{equation}
\begin{table}
  \centering
  \caption{Summary of key notations}\label{tb1}
  \begin{tabular}{c m{6cm} c}
  \hline
    Notation& Explanations\\
    \hline
    $\mathbf{D}=\{d_{l}\}$& Set of distributed sensors\\
    \hline
    $\mathbf{U}=\{ u_{k} \}$& Set of UAV-BSs\\
    \hline
    $\mathbf{S}=\{ s(t) \}$ & The internal environmental state in time slot $t$\\
    \hline
    $\mathbf{O}(t)=\{ o_{k}(t) \}$& Set of observations of local environmental elements by UAV-BSs\\
    \hline
    $\mathbb{T}$ & Set of system time slot\\
    \hline
    $\mathbb{T}_{p}$ & Set of time slot for UAV path update\\
    \hline
    $\mathbf{p}_{k}(t_{p})$ & The position in planned path for UAV $u_{k}$ at time slot $t_{p}$\\
    \hline
    $\mathbf{v}_{k}(t_{p})$& The path update policy of UAV $u_{k}$ at time slot $t_{p}$\\
    \hline
    $A_{l}(t)$& The generated data bits of sensor $d_{l}$ in time slot $t$\\
    \hline
    $A_{u,k}(t)$& The collected data bits by UAV $u_{k}$ in time slot $t$\\
    \hline
    $D_{l,k}(t)$ &The capability of edge data processing on $u_{k}$ in time slot $t$\\
    \hline
    $D_{tm,k}(t)$ &The capability of data transmission through network in time slot $t$\\
    \hline
    $Q_{k}(t)$ & The edge buffer length on $u_{k}$ at $t$\\
    \hline
    $f_{k}(t)$ & The edge processor frequency on $u_{k}$ at $t$\\
    \hline
    $p_{tm,k}(t)$ &The data transmission power of $u_{k}$ at $t$\\
    \hline
    $a_{k}(t)$& The proportion of allocated bandwidth to $u_{k}$ at $t$\\
    \hline
    $\alpha$ &The update rate of network training\\
    \hline
    $\rho_{j}$& The occurring frequency of action $j$ in training samples\\
    \hline
    $\gamma$ & Decay coefficient of future rewards\\
    \hline
    $\phi_{l}$ &A coefficient reflecting the uncover rate of sensor $d_{l}$\\
\hline
\end{tabular}

\end{table}

\subsection{UAV path}
The position of $u_{k}$ is denoted as $[x_{k},y_{k},h_{k}]$, where $\mathbf{p}_{k}=[x_{k},y_{k}]$ represents its projection on the ground and $h_{k}$ is its corresponding hovering altitude. It is assumed that $u_{k}$ covers sensors around $\mathbf{p}_{k}$ within radius $r$. In our previous work \cite{lu2018beyond}, we proved that the optimal height $h_{k}^{*}$ satisfies
\begin{equation}
h_{k}^{*}={\rm tan}(\theta_{b}^{*})r
\end{equation}
where $\theta_{b}^{*}$ is the optimal elevation angle on the coverage boundary. That is, ${\rm tan}(\theta_{b}^{*})$ is the optimal height with $r=1$. It is assumed that the data transmission rate is $C$ and the channel path loss is modeled as the above sub-section. In this case, ${\rm tan}(\theta_{b}^{*})$ can be derived by binary research, see \cite{lu2018beyond}.
By optimized $h_{k}^{*}$, the UAV path only involves two-dimensional position $\mathbf{p}_{k}=[x_{k},y_{k}]$. The time slot for path update is $t_{p} \in \mathbb{T}_{p}$ with interval $\tau_{p}$, where $\mathbb{T}_{p}$ is the time slot set for path update. The corresponding position is denoted as $\mathbf{p}_{k}(t_{p})=[x_{k}(t_{p}),y_{k}(t_{p})]$. Note that the reaction speed of flight control system is typically slower than computation and communication management. While $\tau$ is typically tiny, $\tau_{p}$ should be larger than $\tau$.

In this paper, the UAV path update is conducted in an online manner. At $t_{p}$, the path node for next time slot is determined based on observation set $\mathbf{O}(t_{p})=\{ o_{k}(t_{p}) \}$. Suppose the position of $u_{k}$ at $t_{p}$ is $\mathbf{p}_{k}(t_{p})=[x_{k}(t_{p}),y_{k}(t_{p})]$, its position in path at $(t_{p}+1)$ is
\begin{equation}\label{u_path}
\mathbf{p}_{k}(t_{p}+1)=\mathbf{p}_{k}(t_{p})+[v_{k,x}(t_{p}),v_{k,y}(t_{p})]
\end{equation}
where $\mathbf{v}_{k}(t_{p})=[v_{k,x}(t_{p}),v_{k,y}(t_{p})] \in \mathbb{V}$ is the path update part for time slot $t_{p}$. $\mathbb{V}$ is the candidate policy set. Therefore, the path $\mathbf{P}_{k}$ for $u_{k}$ is
\begin{equation}\label{d_path}
\{ \mathbf{p}_{k}(t_{p}) | t_{p} \in \mathbb{T}_{p}, \mathbf{p}_{k}(t_{p}+1)=\mathbf{p}_{k}(t_{p})+\mathbf{v}_{k}(t_{p}), \mathbf{v}_{k}(t_{p})\in \mathbb{V} \}
\end{equation}
The entire multi-UAV path set is denoted as $\mathbf{P}=\{ \mathbf{P}_{k} | k\in \mathbb{K} \}$.

\subsection{Data generation}
The distributed sensors generate data involving local information. The data is temporarily stored in its buffer denoted as $b_{l}$. It is assumed that sensor $d_{l}$ generates $A_{l}(t)$ bits data during time slot $t$, where $t \in \mathbb{T}$. Parameter $A_{l}(t)$ is an i.i.d. random variable. It is supposed that $A_{l}(t)$ satisfies poisson distribution with $E(A_{l}(t)) = \lambda_{l}$. In practical systems, $A_{l}(t)$ is typically constrained by hardware limitation. Therefore, $A_{l}(t)$ is assumed to be bounded by $[0, A_{max}]$, where $A_{max}$ is the largest value of $A_{l}(t)$. Note that $\lambda_{l}$ is an empirical parameter which may vary among different places.

\subsection{Edge computing}
It is assumed that data collection and its correlated network scheduling policy are updated in discrete time slots with interval $\tau$ \cite{zeng2016throughput,jeong2018mobile}. We suppose $u_{k}$ collects $A_{u,k}(t)$ bits data in time slot $t$. The collected data will be temporarily stored in edge data buffer.

Initial steps of data processing are executed at the edge, where a large amount of redundant data is split out. It is supposed that the extracted information at the edge takes only part of the bandwidth between edge and cloud for transmission. This relieves the heavy burden on network communication. However, the limited edge processing capability will bring new challenges.  In this case, the rest bandwidth can be allocated to edge nodes for data offload, which balances the burden on edge processing and network communication.
\subsubsection{Data caching}
In time slot $t$, the data processing capability on $u_{k}$ is $D_{l,k}(t)$, while the edge data offloading capability is $D_{tm,k}(t)$. The queuing length at the beginning of time slot $t$ on $u_{k}$ is $Q_{k}(t)$, which evolves as follows.
\begin{align}\label{qup}
Q_{k}(t+1)={\rm max}\{ Q_{k}(t)+A_{u,k}(t)\nonumber\\-D_{l,k}(t)-D_{tm,k}(t), 0 \}
\end{align}
where $Q_{k}(0)$ is set to be zero.

\subsubsection{Edge processing}
It is assumed that the edge server on $u_{k}$ needs $L_{k}$ CPU cycles to precess one bit data, which depends on the applied algorithm \cite{mao2017stochastic}. The CPU cycle frequency of $u_{k}$ in time slot $t$ is denoted as $f_{k}(t)$, where $f_{k}(t) \in [0,f_{max}]$. Then $D_{l,k}(t)$ is
\begin{equation}
D_{l,k}(t)=\frac{\tau f_{k}(t)}{L_{k}}
\end{equation}
where $\tau$ is the time slot length. The power consumption of edge data processing \cite{burd1996processor} by $u_{k}$ is
\begin{equation}
p_{l,k}(t)=\kappa _{k}f_{k}^{3}(t)
\end{equation}
where $\kappa _{k}$ is the effective switched capacitance \cite{burd1996processor} of $u_{k}$, which is determined by processor chip structure.

\subsubsection{Data offloading}
It is assumed that the wireless channels between UAV-BSs and center cloud are i.i.d. frequency-flat block fading \cite{lu2018beyond}. Thus the channel power gain between $u_{k}$ and center cloud is supposed to be $\Gamma_{k}(t)=\gamma_{k}(t)g_{0}(\frac{d_{0}}{d_{k}})^{\theta}$, where $\gamma_{k}(t)$ represents the small-scale fading part of channel power gain, $g_{0}$ is the path loss constant, $\theta$ is the path loss exponent, $d_{0}$ is reference distance and $d_{k}$ is the distance between $u_{k}$ and center cloud. Let us consider the system working in FDMA mode, the data transmission capacity from $u_{k}$ to center cloud is
\begin{equation}
D_{tm,k}(t)=a_{k}(t)W\tau{\rm log}_{2}(1+\frac{\Gamma_{k}(t)p_{tm,k}(t)}{a_{k}(t)N_{0}W})
\end{equation}
where $a_{k}(t)$ is the proportion of the bandwidth allocated to $u_{k}$, $p_{tm,k}(t)$ is the transmission power with $p_{tm,k}(t)\in [0,P_{max}]$, $W$ is the entire bandwidth for data offloading and $N_{0}$ is the power spectral density of noise.

\section{UAV Path Planning}
Moving UAV-BSs provide a flexible and wide service coverage, which is especially effective for surveillance tasks. However, all the advantages must be built on smart path planning. In \cite{jeong2018mobile}, the authors proposed an off-line path planning algorithm based on convex optimization. However, it only aims at a single UAV. In multi-UAV system, there exists correlation among UAV-BSs. Each UAV-BS may only obtain local observations. Furthermore, many unexpected environmental factors may pose great challenge to off-line path planning. Therefore, it is essential to adaptively plan UAV path in an online manner.

In the last decade, deep reinforcement learning has obtained impressive results in online policy determination. Different from conventional reinforcement learning, deep reinforcement learning trains deep neutral network to predict rewards of each candidate action. Typically, the neutral network is utilized to fit complex unknown functions in learning tasks \cite{lecun2015deep}. Besides, it can handle more complex input features. In \cite{mnih2015human}, the authors adopted deep reinforcement learning to train a CNN network for playing computer games with online policy. In this paper, we adopt a similar way to train an adaptive path planning network. For $u_{k}$ at time $t$, its input is observation $o_{k}(t)$. In this section, we will discuss the problem formulation and its solution based on deep reinforcement learning.

\subsection{Problem formulation}

The UAV path is planned in terms of time slot $t_{p}$. Our objective is to optimize $\mathbf{P}$ to enhance UAV coverage. In time slot $t_{p}$, $u_{k}$ is supposed to use the plan $\mathbf{p}_{k}(t_{p}+1)$ by local observation $o_{k}(t_{p})$. The policy is determined in a distributed manner without global information. However, local $o_{k}(t_{p})$ is not sufficient to depict the entire coverage. In this case, we need to find an alternative optimization objective to represent entire UAV coverage. Typically, an ideal coverage will sufficiently utilize data processing capability of $u_{k}$. That is, if UAV-BSs cooperate to enhance data collection amount, they will achieve a relatively good coverage. Therefore, the path planning problem is formulated as follows.

We suppose $u_{k}$ collects $A_{u,k}(t_{p})$ bits data in time slot $t_{p}$. It is straightforward to see $A_{u,k}(t_{p})$ is determined by state set $\mathbf{S}$ and UAV path set $\mathbf{P}$ within time slot $t_{p}$. The connection is represented by
\begin{equation}
A_{u,k}(t_{p})=f_{t_{p}}(\mathbf{S},\mathbf{P})
\end{equation}
where $f_{t_{p}}$ is a time varying function determined by environmental elements. The environmental state is supposed to be characterized by a Markov process. The state update is determined by current state $s(t_{p})$ and path set $\mathbf{P}$, which is represented by
\begin{equation}
s(t_{p}+1)=g(s(t_{p}),\mathbf{P})
\end{equation}
Then the problem is formulated as follows.
\begin{flalign}\label{equ:data1}
\mathcal{P}_{1\text{-}\rm{A}}:\,\,\max_{\mathbf{P }} \,\,\, & \frac{1}{\left | \mathbb{T}_{p} \right |} \sum_{t_{p}\in \mathbb{T}_{p}}\frac{1}{K}\sum_{k=1}^{K}A_{u,k}(t_{p})\\
\textrm{s.t.}\,\,\,& \mathbf{p}_{k}(t_{p}+1)=\mathbf{p}_{k}(t_{p})+\mathbf{v}_{k}(t_{p}), \mathbf{v}_{k}(t_{p})\in \mathbb{V}.\tag{\theequation a}\label{p1a_a}\\
&s(t_{p}+1)=g(s(t_{p}),\mathbf{P}).\tag{\theequation b}\label{p1a_b}\\
&A_{u,k}(t_{p})=f_{t_{p}}(\mathbf{S},\mathbf{P}).\tag{\theequation c}\label{p1a_c}
\end{flalign}
where constraint (\ref{p1a_a}) represents the path update policy. Constraint (\ref{p1a_b}) represents the internal state update, which is determined by specific environment. Constraint (\ref{p1a_c}) represents the system reward by $\mathbf{S}$ and $\mathbf{P}$.

The direct optimization of $\mathcal{P}_{1\text{-}\rm{A}}$ faces great challenges. In multi-agent system, there exists correlation among agents. Models in (\ref{p1a_b}) and (\ref{p1a_c}) are determined by complex environmental elements involving correlations among UAV-BSs. Therefore, it is very hard to specifically model $g$ and $f_{t_{p}}$. Furthermore, the internal environmental state $\mathbf{S}$ is also beyond our reach. Instead, we can only plan path by local observation $o_{k}(t_{p})$. In this case, training an alternative function to approximate the complex environmental models may provide an achievable solution. This is the so-called reinforcement learning algorithm.

\subsection{Reinforcement learning algorithm}
The optimal policy is selected by rewards of each candidate action. In reinforcement learning, the Q-function $Q(s,a)$ represents the rewards $r(t)$ of action $a$ under state $s$. Faced with complex environmental elements, it is very hard to model Q-function specifically. In this case, reinforcement learning is applied to learn $Q(s,a)$ by iteratively interacting with around environment. By trials and feedbacks, they will obtain training samples in form of $(s(t),a(t),s(t+1),r(t))$. With these dynamically updating training samples, the trained $Q(s,a)$ will be a good approximation to the environmental Q-function. Reinforcement learning enables agents to learn an adaptive policy maker, which is widely applied in dynamic control and optimization. In path planning problem, UAV-BSs only obtain observations $o_{k}(t_{p})$ of internal state $s(t_{p})$. To explore internal features in obtained observations, deep Q-learning algorithm is applied.


In deep-Q-learning, a deep neutral network $Q(o,a,\theta)$ is applied to approximate Q-function, where $\theta$ represents network weights and $o$ is the observation data. Taking $o$ as input, the Q-network will output predicting rewards of each candidate action. By continuous interaction with around environment, $Q(o,a,\theta)$ will be adaptively adjusted to fit the unknown environmental model. In \cite{mnih2015human}, a CNN network is trained to adaptively play computer games with screen pictures as input. For such rather complex tasks, the observations can be matrix or sequence. In this case, the CNN neutral network can exploit local correlations of elements in $o$ by convolutional filters, which enables extractions of high-dimensional features. In many practical applications, the algorithm works robustly with high-level performance. The training process is summarized in Algorithm \ref{alg11}.

\begin{algorithm}
\renewcommand{\algorithmicrequire}{\textbf{Initialization:}}
\caption{Deep Q-learning process for UAV path planning}
\label{alg11}
\begin{algorithmic}
\REQUIRE
Initialize the relay memory $E(0)$; Initialize deep Q-network weights $\theta$; Initialize the reference network weights $\theta^{-}$ by $\theta$; Initialize $\{ \rho_{j} \}$, $\alpha$, $\gamma$ and $\alpha_{max}$.

\FOR{each epoch}
\STATE Randomly initialize UAV positions.
\WHILE{$t_{p}\leq T_{p}$}
\FOR{each $u_{k}$}
\STATE Collect around service requirements and generate observations $o_{k}(t_{p})$.
\STATE Randomly generate $\epsilon (t_{p}) \in [0,1]$.
\STATE Choose action $a(t_{p})$ by:
\IF{$p<\epsilon(t_{p}) $}
\STATE randomly select an action $a(t_{p})$
\ELSE
\STATE $a(t_{p})={\rm arg max}_{a}Q(o_{k}(t_{p}),a,\theta)$
\ENDIF
\STATE Move along the planned path by executing $a(t_{p})$.
\STATE Collect data from covered sensors.
\STATE Obtain the reward $r(t_{p})$ and observations $o_{k}(t_{p}+1)$.
\STATE Transmit $e(t_{p})=(o_{k}(t_{p}),a(t_{p}),r(t_{p}),o_{k}(t_{p}+1))$ to the central relay memory.
\ENDFOR
\STATE Randomly choose a batch of interaction experience $(o_{i},a_{i},r_{i},o_{i+1})$ from relay memory $E(t_{p})$.
\STATE Determine update rate $\alpha$ by (\ref{upalpha}) and calculate the target value $y_{i}$ by
$$y_{i}=\alpha (r_{i}+\gamma{\rm max}_{a^{'}}Q(o_{i+1},a^{'},\theta^{-}))+(1-\alpha)Q(o_{i},a_{i},\theta_{i})$$
\STATE Train the CNN neutral network $Q(o,a,\theta)$ by loss function $L(\theta)$ (\ref{lf}).
\STATE Update the reference network weights $\theta^{-}$ by $\theta$ every $G$ steps.
\STATE Update $\{ \rho_{j} \}$.

\ENDWHILE
\ENDFOR

\end{algorithmic}
\end{algorithm}

In the training process, the training samples generated by $u_{k}$ at $t_{p}$ is denoted as $e(t_{p})=(o_{k}(t_{p}),a(t_{p}),r(t_{p}),o_{k}(t_{p}+1))$, where $o(t_{p})$ represents the observations by $u_{k}$ at $t_{p}$, $a(t_{p})$ is its action, $r(t_{p})$ is the feedback reward and $o(t_{p}+1)$ is the new observations. In this paper, a central training mode is applied. Training samples of distributed UAV-BSs are gathered by center for network training. The UAV-BSs share the centrally trained network weights. Based on different local observations, they can choose separated actions. The collected training samples are stored in relay memory $E(t_{p})=\{ e(t_{p}-E+1), ......, e(t_{p}) \}$, where $E$ is the buffer length. Each time, the algorithm will randomly sample a batch from $E(t_{p})$ for training. Compared with conventional training by consecutive samples, this method may enable networks to learn from more various past experiences rather than concurrent experiences.

The MSE-based loss function $L(\theta)$ for $(o,a,r,o^{'})$ is defined as follows.
\begin{equation}\label{lf}
L(\theta_{i})=E[ (y_{i}-Q(o,a,\theta_{i}))^{2} ]
\end{equation}
where $y_{i}=\alpha (r+\gamma{\rm max}_{a^{'}}Q(o^{'},a^{'},\theta^{-}))+(1-\alpha)Q(o,a,\theta_{i})$ and $\theta^{-}$ is the reference network weight. Parameter $\gamma$ is the decay coefficient of future rewards while $\alpha$ is the update rate. Note that the loss for other actions in the policy set is set to be $0$.

To ensure convergence, $\alpha$ is typically set as $\frac{1}{\sqrt{t_{p}}}$. Note that the rather frequent action will be trained more tensely, which will break the balance among all candidate actions. Therefore, the sample proportion of each candidate action is maintained here, denoted as $\{ \rho_{j} \}$. Parameter $j$ is the action index. Suppose the sample action index is $j$ and $\alpha$ is upper-bounded by $\alpha_{max}$, $\alpha$ is determined by
\begin{equation}\label{upalpha}
\alpha={\rm min}\{ \alpha_{max},\frac{1}{\rho_{j}\sqrt{t_{p}}} \}
\end{equation}
where $\alpha_{max}$ is the maximum value of $\alpha$. Note that an action with a larger $\rho_{j}$ will have a smaller update rate.

\subsection{Interaction with environment}

\begin{figure}[tbp]
  \centering
  \includegraphics[width=0.99\columnwidth]{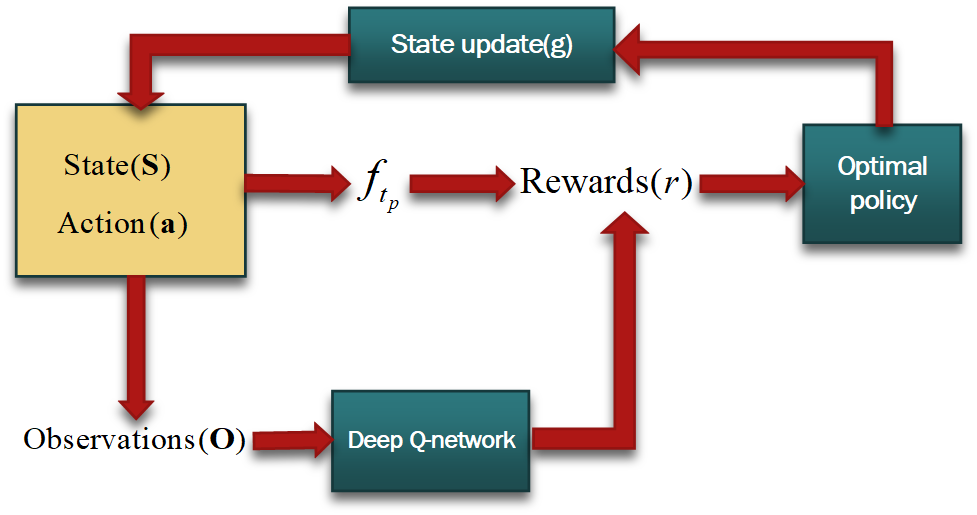}
  \caption{The interaction mode between deep Q-learning algorithm and environment.}\label{Q-implement}
\end{figure}
The environmental model $\{f_{t_{p}},g\}$ and the internal state $\mathbf{S}$ is unknown. In previous subsection, we proposed a deep Q-learning algorithm to adaptively learn environmental elements. Before its implementation, the specific interaction mode with around environment will be discussed in this subsection.

A model of the internal environment and its interaction with the deep Q-learning algorithm is shown in Fig. \ref{Q-implement}. Based on state $\mathbf{S}$ and action $a$, the internal environment will generate a reward $r$ by model $f_{t_{p}}$. In this case, an optimal policy is generated by maximizing the outcome rewards. Then the environmental state will be updated by its internal model $g$. To approximate this environmental model for policy learning, a deep Q-network is implemented to interact with the environment. The observations $\mathbf{O}$ is obtained by Q-network as input, which carries the essential information about $r$ within $\mathbf{S}$. By directly receiving the outcome $r$ from environment, the Q-network will be trained to adaptively estimate $r$. Based on its estimation, we will derive a nearly optimal policy. In this paper, a model-free reinforcement learning is applied. Therefore, the Q-network only needs to receive observations and estimate $r$, without considering the internal state update model $g$. The key elements of the interaction are observations, rewards and action policy.

%

\subsubsection{Observations}
The observations of distributed sensors should involve information of around service requirement, so that the planned path can ensure a better coverage. The sensors which have long been uncovered should have more urgent service requirement. Besides, sensors with larger data rate also requires more coverage. Furthermore, it is also important to avoid overlap among coverage of different UAV-BSs. Therefore, the observations by UAV-BSs should involve the above essential elements for a proper path.

It is straightforward to see that the local observations should be a two-dimensional data set.
Suppose at time $t_{p}$, the local observations involves a $M \times M$ region around $u_{k}$. The observation data is set as a $R \times R$ matrix $O=\{ o_{i,j} \}$. The position of $u_{k}$ is $\mathbf{p}_{k}=(x_{k},y_{k})$. Then the position in map corresponding to $o_{i,j}$ is $\mathbf{p}_{k,i,j}=(x_{k,i,j},y_{k,i,j})=(x_{k}-M+\frac{M}{R}(i-1),y_{k}-M+\frac{M}{R}(j-1))$. $o_{i,j}$ represents observations of sensors around $p_{k,i,j}$. In this way, the local region is represented in a discrete manner. Parameter $R$ is determined by the input data size of the Q-network $Q(o,a,\theta)$. $M$ is set according to the observation range of UAV-BSs. $\frac{M}{R}$ is called the observation sight, which describes the observation wideness.

We suppose sensor $d_{l} \in \mathbf{D}$ maintains its service requirement $\Phi_{l}$, which illustrates its data freshness and accumulation. The process is summarized in Algorithm \ref{alg12}. $\phi_{l}$ represents the data freshness of $d_{l}$. Local data rate $\lambda_{l}$ represents data accumulation rate. They are synthesized by $\Phi_{l}=\lambda_{l}\phi_{l}$. The initial sensor buffer $b_{l}(0)$ is supposed to be $0$. They are updated in terms of time slot $t\in \mathbb{T}$. If uncovered, the data freshness will decay by (\ref{alg2-1}). If covered by UAV-BSs, it is assumed that $d_{l}$ will transmit at most $B_{l}(t)$ bits data in time slot $t$. In this case, $b_{l}(t)$ will update by (\ref{alg2-2}) and the data freshness will be renewed by (\ref{alg2-3}).

It is assumed that $u_{k}$ can obtain $\Phi_{l}$ from the sensors in the $M \times M$ region around it. The processing of the corresponding observations is summarized in Algorithm \ref{alg13}. Matrix $O$ is initialized as zero matrix. $\Phi_{l}$ from sensors around $\mathbf{p}_{k,i,j}$ is added to $o_{i,j}$. In this way, $o_{i,j}$ will reflect the local data freshness and accumulation. For $\mathbf{p}_{k,i,j}$ covered by other UAV-BSs, $o_{i,j}$ will be adjusted by (\ref{alg3-1}). In this case, the observations will involve the coverage overlap among UAV-BSs. Note that $\mathbf{p}_{k,i,j}$ outside the region will lead to $o_{i,j}=0$. The processed $O=\{ o_{i,j} \}$ will be taken as input of the CNN Q-network for rewards estimation.

\begin{algorithm}
\renewcommand{\algorithmicrequire}{\textbf{Initialization:}}
\caption{Sensor data freshness maintaining process}
\label{alg12}
\begin{algorithmic}
\REQUIRE
Initialize $\phi_{l}$ and $b_{l}(0)$ as $0$.

\WHILE {$t \leq T$}
\STATE $b_{l}(t)=b_{l}(t)+A_{l}(t)$
\IF{$d_{l}$ is beyond coverage}
\STATE \begin{equation}\label{alg2-1}\phi_{l}=\phi_{l}+1 \end{equation}
\ELSE
\STATE \begin{equation}\label{alg2-2} b_{l}(t+1)={\rm max}\{ 0,b_{l}(t)-B_{l}(t) \} \end{equation}
\STATE \begin{equation}\label{alg2-3}\phi_{l}=\phi_{l} \frac{b_{l}(t+1)}{b_{l}(t)}\end{equation}
\ENDIF
\STATE $\Phi_{l}=\lambda_{l}\phi_{l}$
\ENDWHILE

\end{algorithmic}
\end{algorithm}

\begin{algorithm}
\renewcommand{\algorithmicrequire}{\textbf{Initialization:}}
\renewcommand{\algorithmicensure}{\textbf{Processing:}}
\caption{UAV-BSs observation processing on $u_{k}$ at $t_{p}$}
\label{alg13}
\begin{algorithmic}
\REQUIRE
Initialize $O=\{ o_{i,j} \}$ by zero matrix; Obtain position $p_{k}(t_{p})$; Observe $\Phi_{l}$ of distributed sensors in its around $M \times M$ region.

\ENSURE

\STATE Obtain the position $\mathbf{p}_{k,i,j}$ corresponding to $o_{i,j}$ as $$ \mathbf{p}_{k,i,j}=(x_{k}-M+\frac{M}{R}(i-1),y_{k}-M+\frac{M}{R}(j-1)) $$
\STATE Find $\mathbf{p}_{k,i,j}$ around sensor $d_{l}$.
\STATE Update $o_{i,j}$ corresponding to the above $\mathbf{p}_{k,i,j}$ by $$o_{i,j}=o_{i,j}+\Phi_{l}$$

\IF {$\mathbf{p}_{k,i,j}$ is covered by other $\widetilde{K}$ nearby UAV-BSs}
\STATE \begin{equation}\label{alg3-1}o_{i,j}=o_{i,j}-\Phi_{u}\widetilde{K}\end{equation}
\ENDIF

\end{algorithmic}
\end{algorithm}

%
%
%

\subsubsection{Action policy}
The path $\mathbf{P}_{k}$ for $u_{k}$ is defined by (\ref{d_path}). The corresponding action policy $a(t_{p})$ for online path planning is $\mathbf{v}_{k}(t_{p})=[v_{k,x}(t_{p}),v_{k,y}(t_{p})] \in \mathbb{V}$ defined in (\ref{u_path}). In this paper, we define a set $\mathbb{V}$ with finite candidate policy. It is assumed that $\left \| \mathbf{v}_{k}(t_{p}) \right \|$ is a constant. That is, the UAV speed is supposed to remain stable and the length of path update does not change.
Then the policy set with discrete direction is defined as follows.
\begin{equation}\label{equ:policy}
\Psi =\{ (v{\rm cos}(\theta_{b}),v{\rm sin}(\theta_{b}))|\theta_{b}=\frac{m\pi}{4}, m=0,1,...,7 \} \cup \{\overrightarrow{0} \}
\end{equation}
where $v$ is the length of a path step and $\theta_{b}$ is the discrete path angle. The zero element means hovering at the current position.

\subsubsection{Reward function}
The objective of $\mathcal{P}_{1\text{-}\rm{A}}$ is to maximize the overall data collection, so that the edge capability is sufficiently utilized. For distributed online decision, the reward must be accessible at the edge UAV-BSs. Therefore, the reward $r(t_{p})$ is defined as the collected data bits in time slot $t_{p}$. Note that the interaction experiences will be transmitted to center for network training. Furthermore, the observations also involve other around UAV-BSs. Therefore, in the process of interaction and learning, the UAV-BSs will tend to cooperate with each other to ensure a relatively good coverage.

\section{System Data Management}
After receiving data from around sensors, the UAV-BSs process their collected raw data and transmit the edge processing result to center cloud. It is assumed that the transmission of processing result takes very little communication resources. Therefore, the majority communication bandwidth between UAV-BSs and center cloud can be utilized for transmitting part of the unprocessed data. In this way, the edge system can enhance its data throughput while reducing UAV onboard energy cost. In this section, we will formulate the data offloading problem into a Lyapunov optimization problem. As the cloud is supposed to be powerful enough, we may consider the edge energy cost and data processing delay as system cost.

\subsection{Problem formulation}
The data offloading policy focus on stabilizing delay while reducing the power consumption of edge processing and data transmission. It is managed in terms of system time slot $t$. It is assumed that each UAV-BS is hovering at a constant speed. Thus, the power consumption of onboard dynamical system is excluded. At time slot $t$, the power consumption of local computation on UAV-BS $u_{k}$ is $p_{l,k}(t)$. The data transmission power of $u_{k}$ is $p_{tm,k}(t)$. We denote the power consumption of $u_{k}$ in time slot $t$ as
\begin{equation}
P_{k}(t)=p_{l,k}(t)+p_{tm,k}(t)
\end{equation}
Then the average weighted sum power consumption is
\begin{equation}\label{sump}
\overline{P}=\underset{T\rightarrow \infty }{lim}\frac{1}{T}\sum_{t=1}^{T}E\left [ \sum_{k=1}^{K}w_{k}P_{k}(t) \right ]
\end{equation}
where $w_{k}$ is a positive parameter with regard to $u_{k}$, which can be adjusted to balance power management of all UAV-BSs. As the system performance metrics, $\overline{P}$ is the long-term edge power consumption. The data offloading policy with respect to $\overline{P}$ can be derived by statistical optimization.

The data collected by $u_{k}$ will be temporarily stored in the onboard data buffer for future processing. In this case, the data queuing delay is the metrics of edge system service quality. By Little's Law \cite{little1961proof}, the average queuing delay of a queuing agent is proportional to the average queuing length. Therefore, the average data amount in onboard data memory is viewed as the system service quality metrics. The long-term queuing length for edge $u_{k}$ is defined as
\begin{equation}
\overline{Q}_{k}=\underset{T\rightarrow \infty }{lim}\frac{1}{T}\sum_{t=1}^{T}E[Q_{k}(t)]
\end{equation}

The network policy at time slot $t$ for $K$ UAV-BSs is denoted as $\mathbf{\Phi }(t)=[\mathbf{f}(t),\mathbf{p}_{tm}(t),\mathbf{a}(t)]$. The operation $\mathbf{f}(t)=\{ f_{1}(t), ......, f_{K}(t) \}$ is the processor frequency for edge data processing on UAV-BSs. The operation $\mathbf{p}_{tm}(t)=\{ p_{tm,1}(t), ......, p_{tm,K}(t) \}$ is the transmission power of data offloading. $\mathbf{a}(t)=\{ a_{1}(t), ......, a_{K}(t) \}$ is the proportion of bandwidth allocation among the $K$ UAV-BSs. Therefore, the optimization of edge data processing policy can be formulated as problem $\mathcal{P}_{2\text{-}\rm{A}}$.

\begin{flalign}\label{equ:data1}
\mathcal{P}_{2\text{-}\rm{A}}:\,\,\min_{\mathbf{\Phi }(t)} \,\,\, & \overline{P}\\
\textrm{s.t.}\,\,\,& \sum_{k=1}^{K}a_{k}(t)\leq 1 ,\, \, \, a_{k}(t)\geq \epsilon\,,k \in \mathbb{K},t \in \mathbb{T}.\tag{\theequation a}\label{equ:data1_a}\\
&0\leq f_{k}(t)\leq f_{max}, 0\leq p_{tm,k}(t)\leq P_{max},\nonumber\\&k \in \mathbb{K},t \in \mathbb{T}.\tag{\theequation b}\label{equ:data1_b}\\
&\underset{T\rightarrow \infty }{lim}\frac{{\rm E}[\left | Q_{k}(t) \right |]}{T}=0\,,k \in \mathbb{K}.\tag{\theequation c}\label{equ:data1_c}
\end{flalign}
Eq. (\ref{equ:data1_a}) is the bandwidth allocation constraint, where $\epsilon$ is a system constant. Constraints (\ref{equ:data1_b}) indicates the boundary of processor frequency and transmission power. For delay consideration, constraint (\ref{equ:data1_c}) forces the edge data buffers to be stable, which guarantees the collected data can be processed in a finite time. Among the constraints, index $k$ belongs to set $\mathbb{K}$ and time slot $t$ belongs to set $\mathbb{T}$

$\mathcal{P}_{2\text{-}\rm{A}}$ is obviously a statistical optimization problem with randomly arriving data. Therefore, the policy $\mathbf{\Phi }(t)$ has to be determined dynamically in each time slot. Furthermore, the spatial coupling of bandwidth allocation among UAV-BSs induces great challenge to the problem solution. Instead of solving $\mathcal{P}_{2\text{-}\rm{A}}$ directly, we propose an online jointly resource management algorithm based on Lyapunov optimization.

\subsection{Online optimization framework}
The proposed $\mathcal{P}_{2\text{-}\rm{A}}$ is a challenging statistical optimization problem. By Lyapunov optimization \cite{7274642}, $\mathcal{P}_{2\text{-}\rm{A}}$ can be formulated into a deterministic problem for each time slot, which can be solved with low complexity. The online algorithm can cope with the dynamical random environment while deriving an overall optimal outcome. Based on Lyapunov optimization framework ,the algorithm aims at saving energy while stabilizing the edge data buffers.

The Lyapunov function for time slot $t$ is defined as
\begin{equation}
L(t)=\frac{1}{2}\sum_{k=1}^{K}Q_{k}^{2}(t)
\end{equation}
This quadratic function is a scalar measure of data accumulation in queue. Its corresponding Lyapunov drift is defined as follows.
\begin{equation}
\Delta L(t)={\rm E}[L(t+1)-L(t)]
\end{equation}
To stabilize the network queuing buffer while minimizing the average energy penalty, the policy is determined by minimizing a bound on the following drift-plus-penalty function for each time slot $t$.
\begin{equation}\label{dpp}
\Delta _{V}(t)=\Delta L(t)+V\sum_{k=1}^{K}w_{k}P_{k}(t)
\end{equation}
where $V$ is a positive system parameter which represents the tradeoff between Lyapunov drift and energy cost. $\Delta L(t)$ is the expectation of a random process with unknown probability distribution. Therefore, an upper bound of $\Delta L(t)$ is estimated so that we can minimize $\Delta_{V}(t)$ without the specific probability distribution. According to the following Lemma \ref{upbound}, we derive a deterministic upper bound of $\Delta L(t)$ for each time slot.
\begin{lemma}\label{upbound}
For an arbitrary policy $\Phi (t)$ constrained by (\ref{equ:data1_a}), (\ref{equ:data1_b}) and (\ref{equ:data1_c}), the Lyapunov drift function is upper bounded by
\begin{equation}\label{upb}
\Delta L(t) \leq -\sum_{k=1}^{K}Q_{k}(t)(D_{l,k}(t)+D_{tm,k}(t))+C_{lp}
\end{equation}
where $C_{lp}$ is a known constant independent with the system policy and $Q_{k}(t)$ is the current data buffer length. $D_{l,k}(t)$ is the edge processing data bits while $D_{tm,k}(t)$ is the offloaded data bits. They are all for time slot $t$.
\end{lemma}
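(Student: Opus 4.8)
The plan is to follow the standard drift-bounding technique of Lyapunov optimization \cite{neely2010stochastic}. First I would start from the buffer recursion (\ref{qup}) and abbreviate $\Theta_k(t) = A_{u,k}(t) - D_{l,k}(t) - D_{tm,k}(t)$. Using the elementary inequality $(\max\{x,0\})^2 \le x^2$, squaring both sides of (\ref{qup}) and expanding gives, for every $k \in \mathbb{K}$,
\[
\tfrac12\left(Q_k^2(t+1) - Q_k^2(t)\right) \;\le\; \tfrac12\,\Theta_k^2(t) + Q_k(t)\,\Theta_k(t).
\]
Summing over $k$, using $L(t) = \tfrac12\sum_{k} Q_k^2(t)$, and taking the conditional expectation given the current queue lengths yields an upper bound on $\Delta L(t)$ made of three parts: a second-moment term $\tfrac12\sum_k \mathrm{E}\!\left[\Theta_k^2(t)\right]$, an arrival cross term $\sum_k Q_k(t)A_{u,k}(t)$, and precisely the term $-\sum_k Q_k(t)\left(D_{l,k}(t)+D_{tm,k}(t)\right)$ appearing in (\ref{upb}).

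The second step is to bound the second-moment term by a finite, policy-independent constant. I would invoke the boundedness built into the system model of Section~II: arrivals satisfy $A_l(t)\le A_{max}$ and each UAV-BS collects from only the finitely many sensors inside its disc of radius $r$, so $A_{u,k}(t)$ is bounded by some $A_{u,\max}$; the processing rate obeys $D_{l,k}(t)=\tau f_k(t)/L_k \le \tau f_{max}/L_k$ because $f_k(t)\le f_{max}$; and the offloading rate is bounded because the map $a\mapsto a\log_2(1+c/a)$ is nondecreasing, so with $a_k(t)\le 1$, $p_{tm,k}(t)\le P_{max}$, and an almost-sure ceiling $\Gamma_k(t)\le\Gamma_{\max}$ on the channel gain we get $D_{tm,k}(t)\le W\tau\log_2\!\left(1+\Gamma_{\max}P_{max}/(N_0 W)\right)$. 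Since the three quantities are nonnegative, $\Theta_k^2(t)\le A_{u,k}^2(t)+\left(D_{l,k}(t)+D_{tm,k}(t)\right)^2$, and summing the resulting uniform bounds over $k$ defines $C_{lp}$.

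The last step handles the arrival cross term. The key remark is that $A_{u,k}(t)$, the data collected in slot $t$, is fixed by the environment and the path-planning layer and is already revealed when the resource-management policy $\mathbf{\Phi}(t)$ is chosen; hence $\sum_k Q_k(t)A_{u,k}(t)$ does not depend on $\mathbf{\Phi}(t)$, which is exactly the sense in which $C_{lp}$ is ``independent of the system policy'', so it can be absorbed there for the per-slot minimization that follows the lemma. I expect the real obstacle to be the uniform bound on $D_{tm,k}(t)$ in the second step: it rests both on the monotonicity of $a\mapsto a\log_2(1+c/a)$ under the bandwidth-sharing constraint and on having an almost-sure ceiling for the fading gain $\Gamma_k(t)$; once those are secured, the remaining manipulations are routine.
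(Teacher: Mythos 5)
Your proposal is correct and follows essentially the same route as the paper: square the queue recursion (\ref{qup}) after dropping the $\max$, sum over $k$, bound the quadratic term via the model's boundedness assumptions, and absorb the arrival cross term $\sum_{k}Q_{k}(t)A_{u,k}(t)$ into the policy-independent constant $C_{lp}$. The only divergence is in the bound on $D_{tm,k}(t)$: where you rely on the monotonicity of $a\mapsto a\log_2(1+c/a)$ together with an almost-sure ceiling on $\Gamma_k(t)$, the paper instead applies $\log_2(1+x)\le x/\ln 2$ to obtain $D_{tm,k}(t)\le \frac{\tau}{N_0}P_{max}\gamma_k g_0(d_0/d_k)^{\theta}$ (thus avoiding any hard cap on the fading gain and treating $C_{lp}$ as deterministic per slot), and it uses the slightly tighter $\max\{A_{u,max}^2,(D_{l,k,max}+D_{tm,k,max})^2\}$ in place of your sum of squares.
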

\begin{proof}
From equation (\ref{qup}), we have

\begin{align}
Q_{k}^{2}(t+1) &= ({\rm max}\{ Q_{k}(t)+A_{u,k}(t)\nonumber \\&-(D_{l,k}(t)+D_{tm,k}(t)),0 \})^{2}\nonumber  \\
& \leq (Q_{k}(t)+A_{u,k}(t)-(D_{l,k}(t)+D_{tm,k}(t)))^{2} \nonumber \\
&=Q_{k}^{2}(t)-2Q_{k}(t)(D_{l,k}(t)+D_{tm,k}(t)-A_{u,k}(t)) \nonumber \\
&+(D_{l,k}(t)+D_{tm,k}(t)-A_{u,k}(t))^{2} \label{drift1}
\end{align}
By (\ref{drift1}), we can subtract $Q_{k}^{2}(t)$ on both side and sum up the inequalities for $k=1, 2, ......, K$, which leads to follows.
\begin{align}
&\frac{1}{2}\sum_{k=1}^{K}\left [ Q_{k}^{2}(t+1)-Q_{k}^{2}(t) \right ] \nonumber \\
&\leq -\sum_{k=1}^{K}Q_{k}(t)(D_{l,k}(t)+D_{tm,k}(t))+\nonumber\\
&\sum_{k=1}^{K}\frac{(D_{l,k}(t)+D_{tm,k}(t)-A_{u,k}(t))^{2}}{2}+\sum_{k=1}^{K}Q_{k}(t)A_{u,k}(t)
\end{align}
As stated in Section \uppercase\expandafter{\romannumeral2}, the data rate of sensors is bounded by $[0,A_{max}]$. Furthermore, the channel capacity between sensors and UAV-BSs is also limited. Therefore, $A_{u,k}(t)$ is supposed to be upper bounded by $A_{u,max}$. Note that the computation and communication resources are limited. Therefore, $D_{l,k}(t)$ and $D_{tm,k}(t)$ are also bounded by their corresponding maximum processing rate. As the maximum processor frequency is $f_{max}$, we have $0 \leq D_{l,k}(t) \leq \frac{\tau f_{max}}{L_{k}}$. Since ${\rm log}_{2}(1+x) \leq \frac{x}{{\rm ln}2}$ and $p_{tm,k}(t) \in [0,P_{max}]$, we have $0 \leq D_{tm,k}(t) \leq \frac{\tau}{N_{0}}P_{max}\gamma_{k}g_{0}(\frac{d_{0}}{d_{k}})^{\theta}$. For simplicity, we separately denote $\frac{\tau f_{max}}{L_{k}}$ and $\frac{\tau}{N_{0}}P_{max}\gamma_{k}g_{0}(\frac{d_{0}}{d_{k}})^{\theta}$ as $D_{l,k,max}$ and $D_{tm,k,max}$. Then the term $(D_{l,k}(t)+D_{tm,k}(t)-A_{u,k}(t))^{2}$ should be bounded by ${\rm max}\{ A_{u,max}^{2}, (D_{l,k,max}+D_{tm,k,max})^{2} \}$
Therefore, we have
\begin{align}
&\frac{1}{2}\sum_{k=1}^{K}\left [ Q_{k}^{2}(t+1)-Q_{k}^{2}(t) \right ] \nonumber \\
&\leq -\sum_{k=1}^{K}Q_{k}(t)(D_{l,k}(t)+D_{tm,k}(t))+\nonumber\\
&\sum_{k=1}^{K}\frac{{\rm max}\{ A_{u,max}^{2}, (D_{l,k,max}+D_{tm,k,max})^{2} \}}{2}\nonumber\\&+\sum_{k=1}^{K}Q_{k}(t)A_{u,k}(t)\nonumber\\
&= -\sum_{k=1}^{K}Q_{k}(t)(D_{l,k}(t)+D_{tm,k}(t))+C_{lp}
\end{align}
where $C_{lp}=\sum_{k=1}^{K}\frac{{\rm max}\{ A_{u,max}^{2}, (D_{l,k,max}+D_{tm,k,max})^{2} \}}{2}+\sum_{k=1}^{K}Q_{k}(t)A_{u,k}(t)$. When considering a specific time slot $t$, it is straightforward to see that $C_{lp}$ is a deterministic constant. This completes the proof.
\end{proof}
Together with (\ref{dpp}) and (\ref{upb}), the drift-plus penalty function is upper-bounded by
\begin{align}
\Delta _{V}(t)\leq -\sum_{k=1}^{K}Q_{k}(t)(D_{l,k}(t)+D_{tm,k}(t))+V\sum_{k=1}^{K}w_{k}P_{k}(t)\nonumber\\+C_{lp}
\end{align}
By optimizing the above upper bound of $\Delta_{V}(t)$ in each time slot $t$, the data queuing length can be stabilized on a low level while the power consumption penalty is also minimized. In this way, the overall optimal policy can be derived without specific probability distributions.
In Lemma \ref{upbound}, parameter $C_{lp}$ is not affected by system policy. Therefore, it is reasonable to omit $C_{lp}$ in the policy determination problem.

Then the modified problem in each time slot $t$ based on Lyapunov optimization framework is defined as follows.
\begin{flalign}\label{equ:data2}
\mathcal{P}_{2\text{-}\rm{B}}:\,\,\min_{\mathbf{\Phi }(t)} \,\,\, & -\sum_{k=1}^{K}Q_{k}(t)(D_{l,k}(t)+D_{tm,k}(t))\nonumber\\&+V\sum_{k=1}^{K}w_{k}P_{k}(t)\\
\textrm{s.t.}\,\,\,& \sum_{k=1}^{K}a_{k}(t)\leq 1 ,\, \, \, a_{k}(t)\geq \epsilon\,,k \in \mathbb{K}\,\,, t \in \mathbb{T}.\tag{\theequation a}\label{equ:data2_a}\\
&0\leq f_{k}(t)\leq f_{max}, 0\leq p_{tm,k}(t)\leq P_{max},\nonumber\\&k \in \mathbb{K}\,\,, t \in \mathbb{T}.\tag{\theequation b}\label{equ:data2_b}
\end{flalign}

%

\subsection{Solution for $\mathcal{P}_{2\text{-}\rm{B}}$}
In last subsection, we formulated $\mathcal{P}_{2\text{-}\rm{B}}$ for deriving optimal policy in each time slot. The optimization objectives include local computation processor frequency $\mathbf{f}(t)$, data transmission power $\mathbf{p}_{tm}(t)$ and bandwidth allocation $\mathbf{a}(t)$. In this section, we will divide $\mathcal{P}_{2\text{-}\rm{B}}$ into two subproblems and derive a solution for optimal policy.
\subsubsection{Optimal frequency for edge processor}
We first delete part of the objective function independent of $\mathbf{f}(t)$. Then it is straightforward to see that the subproblem with respect to $\mathbf{f}(t)$ is defined as follows.
\begin{flalign}\label{subp1}
\mathcal{P}_{3\text{-}\rm{A}}:\,\,\min_{\mathbf{f }(t)} \,\,\, & -\sum_{k=1}^{K}\frac{\tau Q_{k}(t)}{L_{k}}f_{k}(t)+V\sum_{k=1}^{K}w_{k}\kappa _{k}f_{k}^{3}(t)\\
\textrm{s.t.}\,\,\,&0\leq f_{k}(t)\leq f_{max}\,,k \in \mathbb{K}\,\,, t \in \mathbb{T}.\tag{\theequation a}\label{equ:subp1_a}
\end{flalign}

It is obvious to confirm that $\mathcal{P}_{3\text{-}\rm{A}}$ is a convex optimization problem. Furthermore, there is no coupling among elements in $\mathbf{f }(t)$. Therefore, the optimal processor frequency can be derived separately for each $k$. The stationary point of $\frac{\tau Q_{k}(t)}{L_{k}}f_{k}(t)+Vw_{k}\kappa _{k}f_{k}^{3}(t)$ is $\sqrt{\frac{\tau Q_{k}(t)}{3L_{k}w_{k}\kappa_{k} V}}$. In addition, the optimal processor frequency may also be the boundary $f_{max}$. Then the final solution is given by
\begin{equation}\label{fopt}
f_{k}^{*}(t)={\rm min} \{ f_{max}, \sqrt{\frac{\tau Q_{k}(t)}{3L_{k}w_{k}\kappa_{k} V}} \}\,\,\,(w_{k}>0,V>0)
\end{equation}

\begin{remark}
The optimal processor frequency $f_{k}^{*}(t)$ is a monotone increasing function with respect to data queuing length $Q_{k}(t)$. A straightforward insight is that edge servers tend to process faster as there is much data accumulating in the data buffer. Besides, as $V$ or $w_{k}$ increases, the proportion of edge computation energy cost becomes larger, which results in decreasing of processor frequency. As parameter $\kappa_{k}$ increases, the energy consumption per-frequency gets larger, which causes $f_{k}^{*}(t)$ to decrease. Furthermore, a larger $L_{k}$ corresponds to a lower edge processing frequency. Then the edge server should lower down its processor frequency and offload more data to the cloud.
\end{remark}

\subsubsection{Bandwidth allocation and data transmission power}
We reserve the elements with respect to $\mathbf{p}_{tm}(t)$ and $\mathbf{a}(t)$ and derive the following subproblem.
\begin{flalign}\label{subp2}
\mathcal{P}_{3\text{-}\rm{B}}:\,\,\min_{\mathbf{p_{tm} }(t),\mathbf{a }(t)} \,\,\, & -\sum_{k=1}^{K}Q_{k}(t)D_{tm,k}(t)+V\sum_{k=1}^{K}w_{k}p_{tm,k}(t)\\
\textrm{s.t.}\,\,\,& \sum_{k=1}^{K}a_{k}(t)\leq 1 ,\, \, \, a_{k}(t)\geq \epsilon\,,k \in \mathbb{K}\,\,, t \in \mathbb{T}.\tag{\theequation a}\label{equ:data2_a}\\
&0\leq p_{tm,k}(t)\leq P_{max},k \in \mathbb{K}\,\,, t \in \mathbb{T}.\tag{\theequation b}\label{equ:data2_b}
\end{flalign}

In (\ref{subp2}), we have $D_{tm,k}(t)=a_{k}(t)W\tau{\rm log}_{2}(1+\frac{\Gamma_{k}(t)p_{tm,k}(t)}{a_{k}(t)N_{0}W})$. Note that this is a perspective function of $\widetilde{D}(p_{tm}(t))=W\tau{\rm log}_{2}(1+\frac{\Gamma_{k}(t)p_{tm,k}(t)}{N_{0}W})$ with $D_{tm,k}(t)=a_{k}(t)\widetilde{D}(p_{tm}(t)/a_{k}(t))$. It is straightforward to see that $\widetilde{D}(p_{tm}(t))$ is a concave function with respect to $p_{tm}(t)$. Then $D_{tm,k}(t)$ is jointly concave with respect to $a_{k}(t)$ and $p_{tm,k}(t)$. Therefore, $\mathcal{P}_{3\text{-}\rm{B}}$ is a convex optimization problem. Though it can be solved directly by conventional solvers, the dimensional curse may still be a large obstacle. In this paper, we employ dual decomposition and sequential optimization to solve $\mathcal{P}_{3\text{-}\rm{B}}$ in a more efficient way. Note that the optimal solution of $\mathbf{p_{tm} }(t)$ and $\mathbf{a }(t)$ are coupled to each other. They will be separately solved supposing the other one is fixed. By iteratively optimizing $\mathbf{p_{tm} }(t)$ and $\mathbf{a }(t)$ in turns, the optimal policy in $\mathcal{P}_{3\text{-}\rm{B}}$ will be derived.

Suppose the bandwidth allocation $\mathbf{a }(t)$ is fixed, $\mathcal{P}_{3\text{-}\rm{B}}$ can be reformulated as follows.
\begin{flalign}\label{subp3}
\mathcal{P}_{3\text{-}\rm{C}}:\,\,\min_{\mathbf{p_{tm} }(t)} \,\,\, & -\sum_{k=1}^{K}Q_{k}(t)a_{k}(t)W\tau {\rm log}_{2}(1+\frac{\Gamma_{k}(t)p_{tm,k}(t)}{a_{k}(t)N_{0}W})\nonumber\\&+V\sum_{k=1}^{K}w_{k}p_{tm,k}(t)\\
\textrm{s.t.}\,\,\,&0\leq p_{tm,k}(t)\leq P_{max},k \in \mathbb{K}\,\,, t \in \mathbb{T}.\tag{\theequation a}\label{equ:data3_a}
\end{flalign}
As bandwidth allocation $\mathbf{a }(t)$ is fixed, it is straightforward to see that elements in $\mathbf{p_{tm} }(t)$ are not coupled with each other. Therefore, the optimal transmission power $\mathbf{p}_{tm,k}^{*}(t)$ can be obtained independently for each index $k$. Decomposing $\mathcal{P}_{3\text{-}\rm{C}}$ by index $k$, the corresponding
optimization objective is
\begin{equation}\label{obj}
Q_{k}(t)a_{k}(t)W\tau {\rm log}_{2}(1+\frac{\Gamma_{k}(t)p_{tm,k}(t)}{a_{k}(t)N_{0}W})+Vw_{k}p_{tm,k}(t)
\end{equation}
The stationary point of (\ref{obj}) is $a_{k}(t)W[\frac{Q_{k}(t)\tau}{Vw_{k}{\rm ln}2}-\frac{N_{0}}{\Gamma_{k}(t)}]$. As $p_{tm,k}(t)$ is bounded by $[0,P_{max}]$, the final optimal transmission power is
\begin{equation}
p_{tm,k}^{*}(t)={\rm min}\{ {\rm max}\{ a_{k}(t)W[\frac{Q_{k}(t)\tau}{Vw_{k}{\rm ln}2}-\frac{N_{0}}{\Gamma_{k}(t)}],0 \},P_{max} \}
\end{equation}

Since optimized $\mathbf{p_{tm} }(t)$ with fixed $\mathbf{a }(t)$ has been obtained, it is straightforward to continue optimizing $\mathbf{a }(t)$ with fixed $\mathbf{p_{tm} }(t)$. The corresponding sub-problem is defined as follows.

\begin{flalign}\label{subp4}
\mathcal{P}_{3\text{-}\rm{D}}:\,\,\min_{\mathbf{a}(t)} \,\,\, & -\sum_{k=1}^{K}Q_{k}(t)a_{k}(t)W\tau {\rm log}_{2}(1+\frac{\Gamma_{k}(t)p_{tm,k}(t)}{N_{0}Wa_{k}(t)})\\
\textrm{s.t.}\,\,\,&\sum_{k=1}^{K}a_{k}(t)\leq 1 ,\, \, \, a_{k}(t)\geq \epsilon\,,k \in \mathbb{K}\,\,, t \in \mathbb{T}.\tag{\theequation a}\label{equ:subp4_a}
\end{flalign}

The optimal solutions $a_{k}^{*}(t)$ are coupled in terms of index $k$ by constraints (\ref{equ:subp4_a}). Therefore, $\mathcal{P}_{3\text{-}\rm{D}}$ can not be directly decomposed into sub-problems. In this case, dual decomposition method is applied to obtain decoupled sub-problems of $\mathcal{P}_{3\text{-}\rm{D}}$.

The Lagrange function of $\mathcal{P}_{3\text{-}\rm{D}}$ is
\begin{flalign}\label{lagrange1}
L(\mathbf{a}(t),\lambda)&=-\sum_{k=1}^{K}Q_{k}(t)a_{k}(t)W\tau {\rm log}_{2}(1+\frac{\Gamma_{k}(t)p_{tm,k}(t)}{N_{0}Wa_{k}(t)})\nonumber\\&+\lambda(\sum_{k=1}^{K}a_{k}(t)-1)
\end{flalign}
In (\ref{lagrange1}), $a_{k}(t)$ has been decoupled. Then by minimizing $L(\mathbf{a}(t),\lambda)$ with respect to $a_{k}(t)$, the dual function for index $k$ is derived as
\begin{flalign}\label{dual-d}
g_{k}(\lambda)=&\underset{a_{k}(t)\geq \epsilon }{inf}(-Q_{k}(t)a_{k}(t)W\tau {\rm log}_{2}(1+\frac{\Gamma_{k}(t)p_{tm,k}(t)}{N_{0}Wa_{k}(t)})\nonumber\\&+\lambda a_{k}(t))
\end{flalign}
From (\ref{dual-d}), the dual function of $\mathcal{P}_{3\text{-}\rm{D}}$ is
\begin{equation}\label{dual-sum}
L(\lambda)=-\lambda +\sum_{k=1}^{K}g_{k}(\lambda)
\end{equation}
Finally, the dual problem of $\mathcal{P}_{3\text{-}\rm{D}}$ is

\begin{flalign}\label{subp5}
\mathcal{P}_{3\text{-}\rm{E}}:\,\,\max_{\lambda}\,\,\, & -\lambda +\sum_{k=1}^{K}g_{k}(\lambda)\,\,\, & \\
\textrm{s.t.}\,\,\,&\lambda \geq 0.\tag{\theequation a}\label{equ:subp5_a}
\end{flalign}
Dual sub-problem $\mathcal{P}_{3\text{-}\rm{E}}$ can be solved by gradient decent method. According to (\ref{dual-sum}) and (\ref{dual-d}), the corresponding gradient is
\begin{equation}\label{gradient}
\bigtriangledown L(\lambda )=\sum_{k=1}^{K}a_{k}^{*}(t)-1
\end{equation}
where $a_{k}^{*}(t)$ is the optimal bandwidth allocation of index $k$ for current $\lambda$, which achieves the lower bound in (\ref{dual-d}).

\begin{figure}[tbp]
  \centering
  \includegraphics[width=0.92\columnwidth]{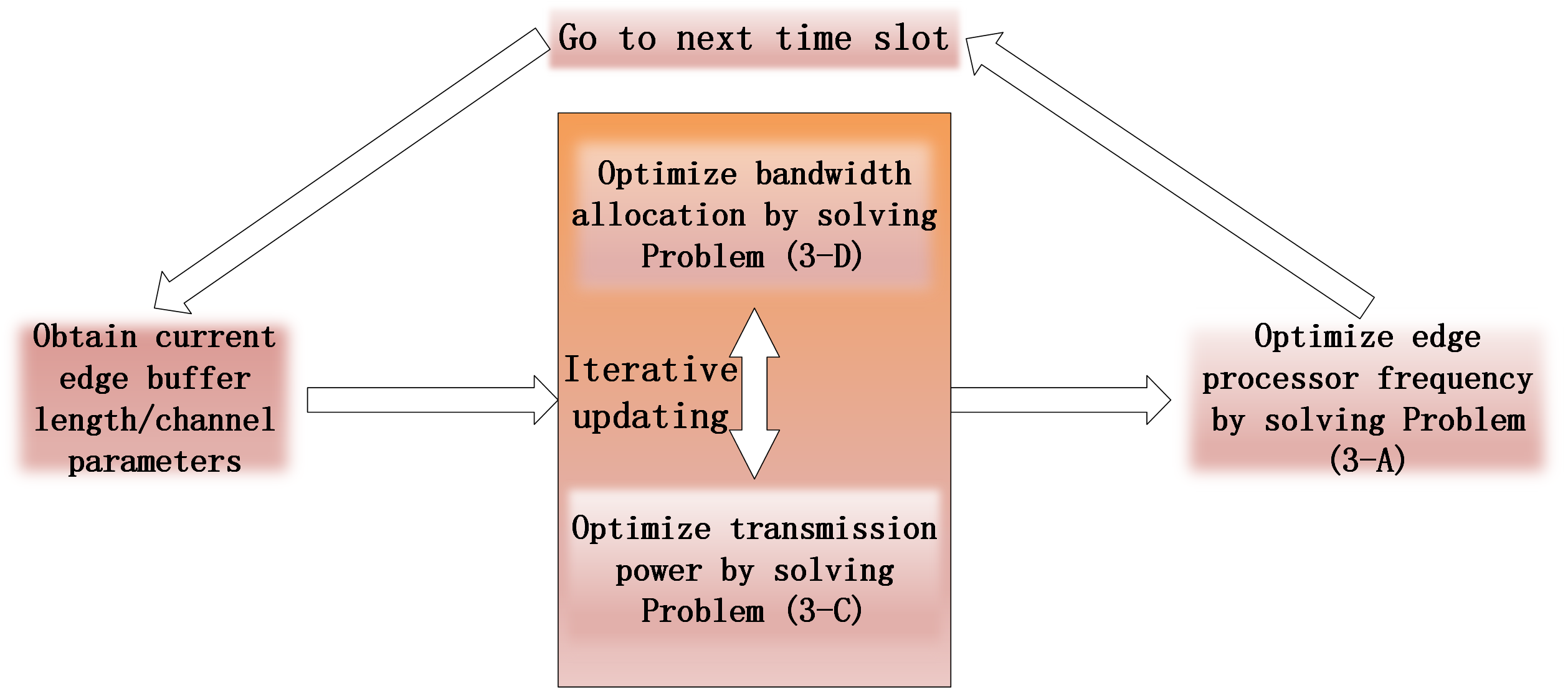}
  \caption{The overall process to solve $\mathcal{P}_{2\text{-}\rm{B}}$, which involves the solution of all sub-problems.}\label{online_process}
\end{figure}

Note that this is a convex optimization problem and $a_{k}(t)$ is constrained by $a_{k}(t) \geq \epsilon$. Therefore, either the stationary point or $\epsilon$ achieves the lower bound in (\ref{dual-d}). It is straightforward to see that the stationary point can be derived by conducting bisection method on the derivative function. Denoting the derived stationary point as $a_{k}^{s}(t)$, we have
\begin{equation}
a_{k}^{*}(t)= {\rm max}\{ a_{k}^{s}(t), \epsilon \}
\end{equation}
Together with (\ref{gradient}), $\lambda$ is updated by
\begin{equation}\label{lupd}
\lambda ^{(n+1)}=\lambda ^{(n)}+\sum_{k=1}^{K}a_{k}^{*}(t)-1
\end{equation}
where $n$ is the number of iterations.
By iteratively updating $\lambda$ and computing corresponding $a_{k}^{*}(t)$, we can finally derive the optimal bandwidth allocation and transmission power.
\begin{remark}
The insights of the iteration method can be explained as follows.
Given $\lambda^{(n)}$, if its corresponding bandwidth allocation happens to satisfy $\sum_{k=1}^{K}a_{k}^{*}(t) > 1$, $\lambda^{(n+1)}$ will increase according to (\ref{lupd}). This will obviously leads $g_{k}(\lambda)$ to increase. Furthermore, from definitions in (\ref{lagrange1}), $\frac{\partial L(\mathbf{a}(t),\lambda)}{\partial \mathbf{a}(t)}$ will tend to be positive when $\lambda$ is sufficiently large. In this case, the corresponding stationary point $a_{k}^{s}(t)$ will be smaller. Then the derived $\sum_{k=1}^{K}a_{k}^{*}(t)$ with respect to a larger $\lambda^{(n+1)}$ will tend to decrease. Meanwhile, if $\sum_{k=1}^{K}a_{k}^{*}(t) < 1$, $\lambda^{(n+1)}$ will decrease, which leads $\sum_{k=1}^{K}a_{k}^{*}(t)$ to increase. Finally, the iterations will lead to a desired bandwidth allocation.
\end{remark}

In summary, the final solution to $\mathcal{P}_{2\text{-}\rm{B}}$ is shown as the chart in Fig. \ref{online_process}. All the discussed sub-problems are combined to obtain an optimal policy $\mathbf{\Phi }(t)$.

\section{Simulations}
We carried out simulations of the distributed data processing network to test the performance of proposed algorithms for network management. It is assumed that $20000$ sensors are distributed in a $600\times 400$ area. The distributed data generation satisfies poisson distribution with rate $\lambda_{l}$ for $d_{l}$. Rate $\lambda_{l}$ is supposed to be uniformly distributed in $[250,300]$. Its communication rate with UAV-BSs is set as $2000$ bits/s. The system time slot interval $\tau$ is set as $0.5$s. The $K$ UAV-BSs start hovering from randomly distributed positions in the $600\times 400$ area. The radius of UAV-BS coverage is $60$. Their action set for path update is $\Psi$ in (\ref{equ:policy}), where $v$ is set as 8. The path update interval $\tau_{p}$ is $5\tau$ and $\Phi_{u}$ in observation processing is set as $8000$. The parameters with respect to edge data processing are $f_{max}=2{\rm GHz}$, $\kappa_{k}=10^{-26}$, $L_{k}=3000 {\rm Cycles/bit}$. The data offloading channel involves $g_{0}=10^{-4}$, $d_{0}=1$, $\theta=4$, $W=2{\rm MHz}$, $P_{max}=5{\rm W}$, $N_{0}=-167{\rm dBm/}$ and $\gamma_{k}(t)\sim {\rm Exp(1)}$. The weights $w_{k}$ in (\ref{sump}) is set as $\frac{1}{K}$ \cite{mao2017stochastic}.

In simulations of the path planning algorithm, we apply a CNN network with four hidden layers \cite{mnih2015human}. The input data is an $84\times 84\times 1$ observation produced by Algorithm \ref{alg13}. The first hidden layer consists of 32 filters of $8\times 8$ with stride $4$. The second hidden layer consists of 64 filters of $4\times 4$ with stride 2. The third hidden layer consists of 64 filters of $3\times 3$ with stride 1. The final hidden layer is fully-connected with 512 units. Each of the hidden layer is followed by a nonlinear rectifier function \cite{nair2010rectified}. The output layer is fully-connected with an estimated reward value for each candidate action. In training process, we apply $\epsilon$-greedy strategy as shown in Algorithm \ref{alg11}. The original $\epsilon(0)$ is set as $0.97$. In each training, the coefficient will decay by $\epsilon(t_{p}+1)=0.92\epsilon(t_{p})$. $\epsilon(t_{p})$ is reset as $0.97$ every 300 time slots so that the system can keep exploring around environment and learn the new explorations.

As shown in Fig. \ref{stable_test} ,we first validate the effective coverage of the proposed path planning algorithm based on deep reinforcement learning. Starting from random initial positions, the average service urgency $\frac{1}{L}\sum_{l} \phi_{l}$ is recorded within $10000$ system time slots. Parameter $\phi_{l}$ for sensor $d_{l}$ is defined in Algorithm \ref{alg12}, which reflects the waiting time of $d_{l}$ before covered. Therefore, a small and stable $\frac{1}{L}\sum_{l} \phi_{l}$ corresponds to a better coverage performance. In Fig. \ref{stable1}, the performance of randomly selected policy with $K=6$ is marked by '$>$'. Its average service urgency is the largest with the worst stability. The proposed path planning algorithm with $K=6$ and $168\times 168$ observation range corresponds to the curve marked by circles. Compared with the random policy, it obtains a much smaller service urgency with enhanced stability. Setting $K=9$, the coverage performance is further improved as shown by curves marked by squares. Finally, the observation range is set as $252\times 252$ with $K=9$. As shown by curves marked by '*', the coverage is brilliant with the best stability. In Fig. \ref{stable2}, we validate the influence of balance on $\alpha$ in (\ref{upalpha}). Without balance on $\alpha$, the training update rate is set to be equal for each candidate action. The result is shown by curves marked by circles. It is straightforward to see that its stability is much worse compared with the proposed adaptive balance on $\alpha$, especially for the starting phase. In learning process, the frequently occurring actions will be trained more, which results in challenge to training convergence. In multi-agent system, this issue will be enlarged. Therefore, the balance on $\alpha$ will efficiently enhance performance of reinforcement learning.

\begin{figure}[tbp]
\centering
\subfigure[Test of average $\phi_{l}$ for $4$ cases in $10000$ time slots. ] { \label{stable1}
\includegraphics[width=0.8\columnwidth]{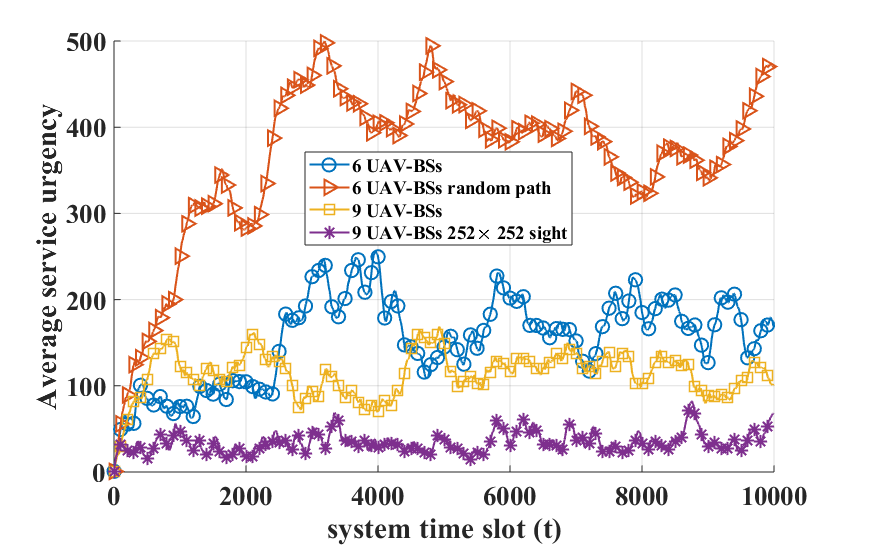}}\\
\subfigure[Test on the balance of $\alpha$ for candidate actions ($K=9\,,252\times 252$).]{ \label{stable2}
\includegraphics[width=0.8\columnwidth]{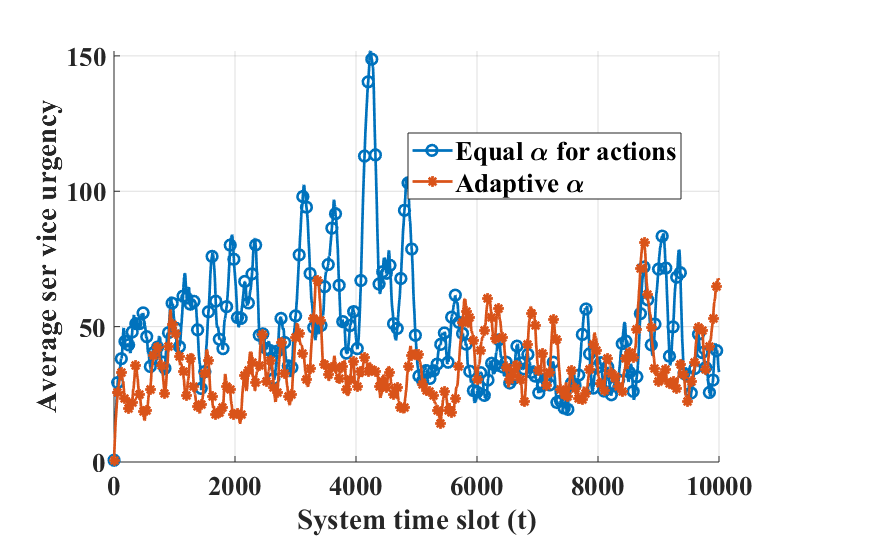}}
\caption{The performance in terms of average service urgency for random path planning, 6-UAV case, 9-UAV case and 9-UAV case with larger observation range.} \label{stable_test}
\end{figure}

In Fig. \ref{gamma_test}, we investigate the coverage performance with respect to parameter $\gamma$. It represents the decay of future rewards in overall rewards. A smaller $\gamma$ means that we only care rewards in a short time range. A larger $\gamma$ means the future rewards in a longer time range is taken into consideration. In simulations, we record the mean value and variance of $\frac{1}{L}\sum_{l} \phi_{l}$ within $3000$ time slots for varied $\gamma$. Fig. \ref{mean_gamma} displays the mean value of $\frac{1}{L}\sum_{l} \phi_{l}$ for $252\times 252$ sight and $168\times 168$ sight, while Fig. \ref{var_gamma} shows the corresponding variance. The results are derived by Monte Carlo Method with $12$ experiments for each $\gamma$. As shown in Fig. \ref{mean_gamma} and Fig. \ref{var_gamma}, the increase of $\gamma$ will derive a better coverage. In this case, the planner will consider more future elements and enhance its policy. However, both curves meet the turning point as $\gamma$ gets larger. In this case, the planner considers a long range of future rewards, which is beyond the local observations. Therefore, the coverage may get worse. Note that for relatively small $\gamma$, curves of $252\times 252$ observation range does not show improvement of coverage. In this situation, the planner considers little future elements, which leads to the poor utilization of information in a larger observation range. Simulation results show that $0.8$ is a reasonable value for $\gamma$.

\begin{figure}[tbp]
\centering
\subfigure[The overall sensor service urgency with respect to future rewards decay $\gamma$. ] { \label{mean_gamma}
\includegraphics[width=0.7\columnwidth]{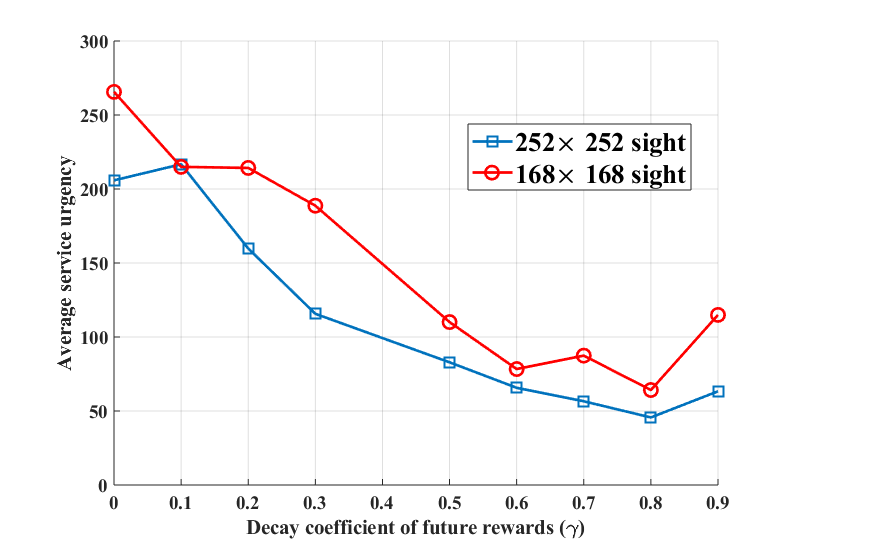}}\\
\subfigure[The entire variance of service urgency with respect to coefficient $\gamma$.]{ \label{var_gamma}
\includegraphics[width=0.7\columnwidth]{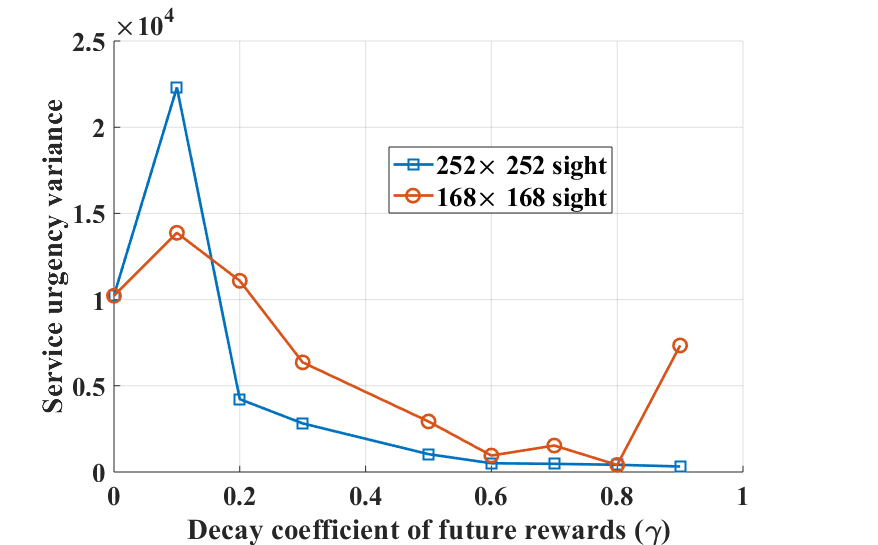}}
\caption{The coverage stableness and mean value with respect to coefficient $\gamma$ for $168 \times 168$ and $252\times 252$ observation range.} \label{gamma_test}
\end{figure}

In Fig. \ref{edge_length}, we record the average UAV buffer length within $10000$ time slots, which represents the edge data delay. Parameter $K$ is set as $9$ with $252 \times 252$ observations. As shown in Fig. \ref{edge_length2}, the system turns out to break down if we only apply edge data processing or only transmit collected data to the cloud. Faced with the huge data set, the edge processor and the communication network will be too stressful to maintain the system. Then the data will keep accumulating until the system breaks down. In Fig. \ref{edge_length1}, the curve marked by squares represents the proposed data management algorithm based on Lyapunov optimization. It is compared with the evenly bandwidth allocation method, where $a_{k}(t)$ is evenly set as $\frac{1}{K}$. As shown in Fig. \ref{edge_length1}, the proposed data management algorithm performs much better than evenly allocating bandwidth in dealing with the big data circumstances. In this situations, the proposed algorithm can smartly allocate the network resources to balance the varied burden on edge nodes.

\begin{figure}[tbp]
\centering
\subfigure[The separate performance in terms of edge buffer length. ] { \label{edge_length2}
\includegraphics[width=0.75\columnwidth]{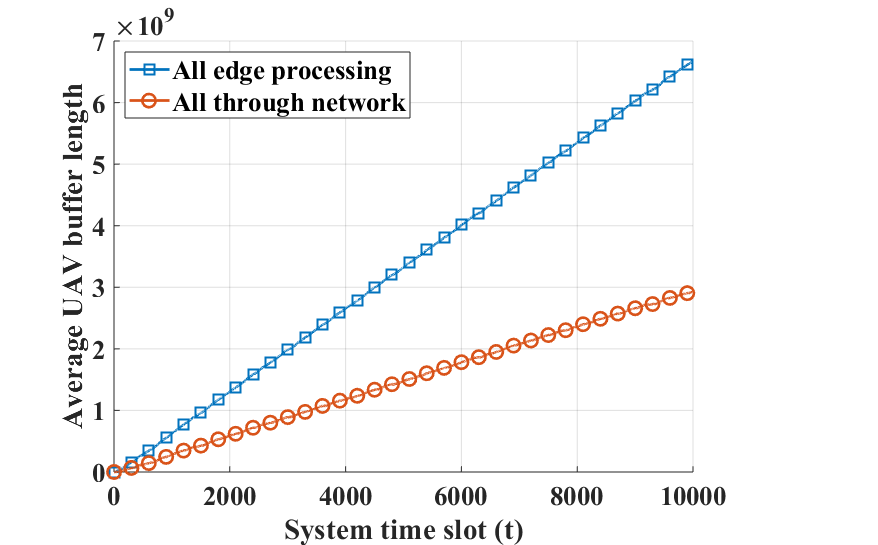}}\\
\subfigure[The performance of single edge processing and single data transmission system.]{ \label{edge_length1}
\includegraphics[width=0.75\columnwidth]{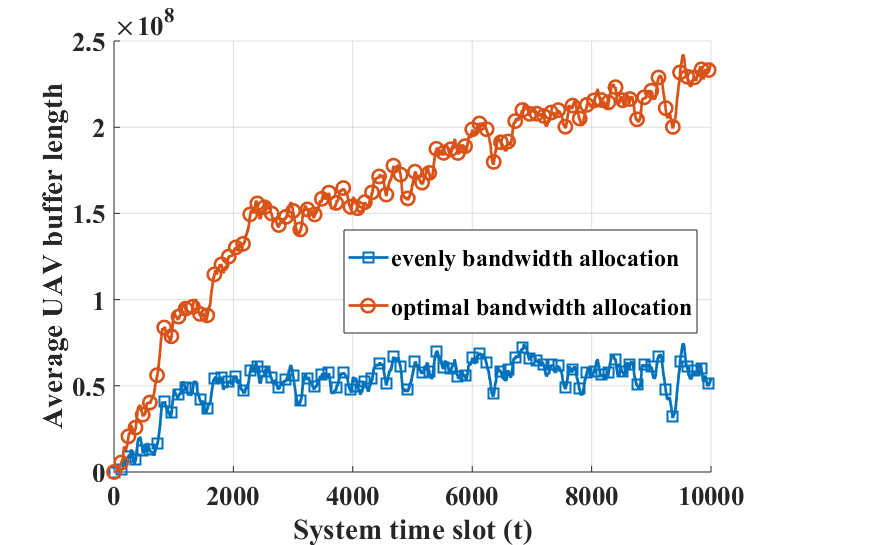}}
\caption{The record of average edge buffer length for optimal network management policy, evenly bandwidth allocation, single edge processing and single data transmission mode ($K=9$, $252\times 252$ sight, $V=6\times 10^{9}$).} \label{edge_length}
\end{figure}

\begin{figure}[tbp]
\centering
\subfigure[The performance in terms of average power consumption. ] { \label{power_V}
\includegraphics[width=0.75\columnwidth]{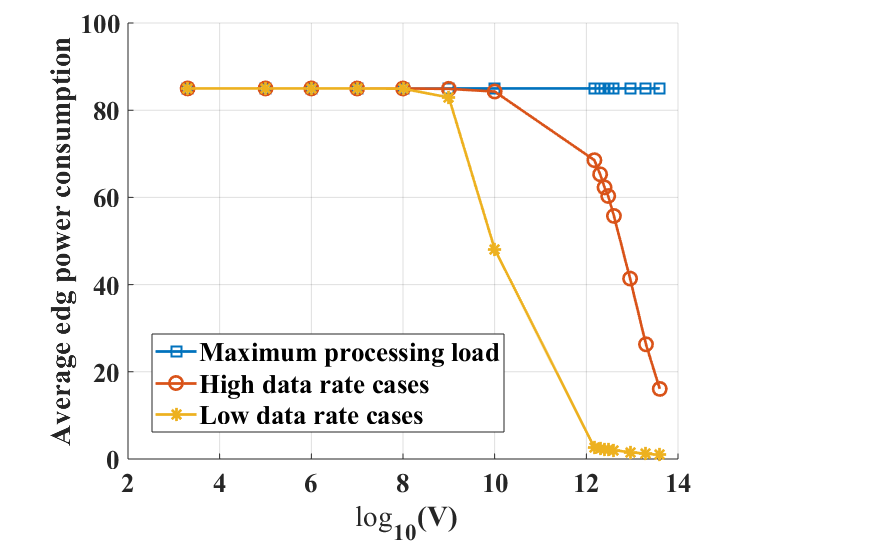}}\\
\subfigure[The recorded average buffer length.]{ \label{buffer_V}
\includegraphics[width=0.75\columnwidth]{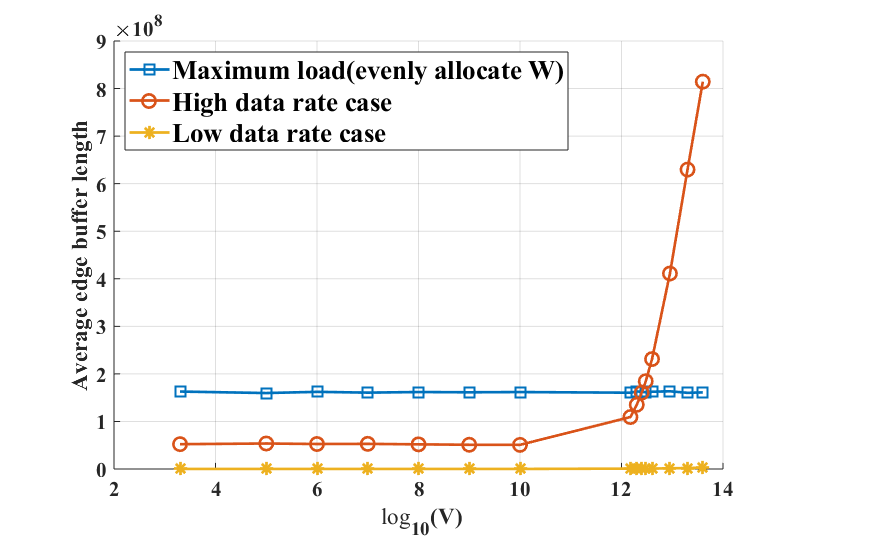}}
\caption{The average power consumption and buffer length for maximum processing load($f_{k}(t)=f_{max}, p_{tm,k}(t)=P_{max}$), proposed network management policy for equal data rate and low data rate ($K=9$, $252\times 252$ sight).} \label{V_test}
\end{figure}

Fig. \ref{V_test} shows the influence of parameter $V$ on power consumption and average UAV buffer length. The low data rate case means half the former data rate of sensors, while the high data rate case remains unchanged. They both apply the proposed data scheduling algorithm based on Lyapunov optimization. The maximum processing load case means setting $f_{k}(t)=f_{max}$, $p_{tm,k}(t)=P_{max}$ and $a_{k}(t)=\frac{1}{K}$. Its data rate is the same with the high data rate case. As shown in Fig. \ref{power_V}, the power consumption of the maximum processing load case remains at the top level. The proposed algorithm lowers down the power consumption as $V$ increases. In Fig. \ref{buffer_V}, the data processing delay is investigated. The maximum working load case is not affected by $V$. The proposed algorithm maintains good performance in low data rate case. In high data rate case, its data delay is kept at a low level unless $V$ gets too large. Taking large $V$, the algorithm will pose too much weight on power consumption, which results in large delay for high data rate case. Note that the proposed algorithm even achieves a lower delay compared with the maximum processing load case. That is caused by smartly determining $a_{k}(t)$ for bandwidth allocation. That is, the adaptive data scheduling algorithm can save energy while lower down delay.

\section{Conclusion}
In this paper, we investigated a big data processing system for applications in internet of things. The system is composed of three layers, involving distributed sensors, UAV-BSs and center cloud. To collect data distributed among sensors with efficiency, a UAV path planning algorithm based on deep reinforcement learning was proposed. The local observations by UAV-BSs were taken as input of neutral networks to predict rewards of candidate actions. The corresponding designing issues involving observation feature, training process and action rewards were figured out. By simulations, we validated the efficient coverage of the proposed path planning algorithm. To process the collected data with efficiency while saving power, we developed a network scheduling algorithm based on Lyapunov optimization. It figured out the network resources scheduling and achieved a tradeoff between edge pre-processing and network transmission. Its performance in terms of data delay and power consumption was tested by simulations. For the future, it would be interesting to extend the work to scenarios in smart cities, where the user behavior, mobility and its coexist with existing cellar network should be investigated.


%

%



\ifCLASSOPTIONcaptionsoff
  \newpage
\fi



\bibliography{usart}
\bibliographystyle{unsrt}

\end{document}